\newtheorem{theorem}{Theorem}
\newtheorem{lemma}{Lemma}
\newtheorem{proposition}{Proposition}
\def\I_f{I_{\mbox{{\it\scriptsize for}}}}
\begin{document}

\title{A Note on the Inversion Complexity of Boolean Functions\\
       in Boolean Formulas}

\author{Hiroki Morizumi \vspace{2mm}\\
{\small Graduate School of Information Sciences, Tohoku University,}\\
{\small Sendai 980-8579, Japan}\\
{\small morizumi@ecei.tohoku.ac.jp}}

\date{}

\maketitle

\begin{abstract}
In this note, we consider the minimum number of NOT operators
in a Boolean formula representing a Boolean function.
In circuit complexity theory, the minimum number of NOT gates
in a Boolean circuit computing a Boolean function $f$ is called
the inversion complexity of $f$.
In 1958, Markov determined the inversion complexity of every Boolean
function and particularly proved that $\lceil \log_2(n+1) \rceil$
NOT gates are sufficient to compute any Boolean function on $n$ variables.
As far as we know, no result is known for inversion complexity
in Boolean formulas, i.e., the minimum number of NOT operators
in a Boolean formula representing a Boolean function.
The aim of this note is showing that we can determine the inversion complexity
of every Boolean function in Boolean formulas by arguments based on
the study of circuit complexity.
\end{abstract}

\section{Introduction} \label{sec:intro}

When we consider Boolean circuits with a limited number of NOT gates,
there is a basic question: Can a given Boolean function be computed by
a circuit with a limited number of NOT gates?
This question has been answered by Markov \cite{M58} in 1958.
The {\it inversion complexity} of a Boolean function $f$ is
the minimum number of NOT gates required to construct a Boolean circuit
computing $f$, and Markov completely determined the inversion complexity
of every Boolean function $f$.
In particular, it has been shown that
$\lceil \log_2(n+1) \rceil$ NOT gates are sufficient to compute
any Boolean function.

After more than 30 years from the result of Markov,
Santha and Wilson \cite{SW93} investigated the inversion complexity
in {\it constant depth circuits} and showed that on the restriction
$\lceil \log_2(n+1) \rceil$ NOT gates are not sufficient to compute
a Boolean function.
The result has been extended to {\it bounded depth circuits}
by Sung and Tanaka \cite{ST03}.
Recently we completely determined the inversion complexity
of every Boolean function in {\it non-deterministic circuits},
and particularly proved that one NOT gate is sufficient to compute
any Boolean function if we can use an arbitrary number of guess inputs
\cite{Morizumi}.

A Boolean circuit whose gates have fan-out one is called a {\it formula}.
Formulas are one of well-studied circuit models in circuit complexity
theory.
Note that a Boolean circuit whose gates have fan-out one corresponds
to a Boolean formula.
In this note, we investigate the inversion complexity in formulas,
which corresponds to the minimum number of NOT operators in a Boolean
formula representing a Boolean function.
As far as we know, there is no result for the inversion complexity
in formulas.
We completely determine the inversion complexity of every Boolean function
in formulas.

\section{Preliminaries} \label{sec:pre}

A {\it circuit} is an acyclic Boolean circuit which consists of AND gates of
fan-in two, OR gates of fan-in two and NOT gates.
A {\it formula} is a circuit whose gates have fan-out one.
We denote the number of NOT gates in a formula $C$ by $not(C)$.

Let $x$ and $x'$ be Boolean vectors in $\{0,1\}^n$.
$x \leq x'$ means $x_i \leq x'_i$ for all $1 \leq i \leq n$.
$x < x'$ means $x \leq x'$ and $x_i < x'_i$ for some $i$.
A Boolean function $f$ is called {\it monotone} if $f(x) \leq f(x')$
whenever $x \leq x'$.

A {\it chain} is an increasing sequence $x^1 < x^2 < \cdots < x^k$ of
Boolean vectors in $\{0,1\}^n$.
The {\it decrease} $d_X(f)$ of a Boolean function $f$ on a chain $X$
is the number of indices $i$ such that $f(x^i) \not\leq f(x^{i+1})$
for $1 \leq i \leq k-1$.
The {\it decrease} $d(f)$ of $f$ is the maximum of $d_X(f)$ over all
increasing sequences $X$.
We denote the inversion complexity of a Boolean function $f$
in circuits by $I(f)$.
Markov gave the tight bound of $I(f)$ for every Boolean function $f$.

\begin{proposition}[Markov\cite{M58}] \label{prop:markov}
For every Boolean function $f$,
$$I(f) = \lceil \log_2(d(f)+1) \rceil.$$
\end{proposition}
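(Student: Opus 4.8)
This is Markov's theorem, so I would follow the classical two-sided argument: a lower bound by induction on the number of NOT gates, and a matching upper bound by an explicit three-stage construction. For the lower bound $I(f)\ge\lceil\log_2(d(f)+1)\rceil$, I would show by induction on $r$ that any circuit with $r$ NOT gates computes a function $g$ with $d(g)\le 2^r-1$. The base case $r=0$ holds because a NOT-free circuit computes a monotone function, whose decrease is $0$. For $r\ge 1$, pick a NOT gate none of whose ancestors is a NOT gate; its input is computed by a monotone subcircuit, hence is a monotone function $h(x)$. Deleting that gate and feeding a fresh variable $z$ where its output went leaves a circuit with $r-1$ NOT gates computing some $G(x,z)$ with $g(x)=G(x,\neg h(x))$; hardwiring $z$ to a constant gives circuits for $G(x,0)$ and $G(x,1)$ with at most $r-1$ NOT gates, so both have decrease at most $2^{r-1}-1$ by the induction hypothesis. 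On any chain $x^1<\cdots<x^k$, monotonicity of $h$ makes $\neg h(x^i)$ nonincreasing, hence $1$ on a prefix and $0$ on a suffix, so $g$ agrees with $G(\cdot,1)$ on the prefix and with $G(\cdot,0)$ on the suffix; at the single breakpoint at most one extra decrease occurs, so $d_X(g)\le 2(2^{r-1}-1)+1=2^r-1$. Since $r$ is an integer this gives $r\ge\lceil\log_2(d(f)+1)\rceil$.

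For the upper bound I would construct a circuit for $f$ with exactly $r:=\lceil\log_2(d(f)+1)\rceil$ NOT gates, all of them confined to a middle stage. Write $d=d(f)$ and, for $x\in\{0,1\}^n$, let $c(x)$ be the maximum of $d_Y(f)$ over all chains $Y$ whose greatest element is $x$. Appending $x$ to a chain ending below $x$ shows that $c$ is monotone, and the key property is that if $x<x'$ with $f(x)=1$ and $f(x')=0$ then $c(x')\ge c(x)+1$; hence $c(x)=c(x')$ forces $f(x)\le f(x')$ whenever $x\le x'$. Put $g_i(x)=1$ if $c(x)\ge i$ and $g_i(x)=0$ otherwise, for $1\le i\le d$; these are monotone and already sorted, $g_1(x)\ge\cdots\ge g_d(x)$. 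The first stage is a monotone (hence NOT-free) circuit computing $g_1,\dots,g_d$ from $x$, and the last stage is the monotone formula
$$\phi\;=\;\bigvee_{a\,:\,f(a)=1}\Bigl(\bigwedge_{j\,:\,a_j=1}x_j\Bigr)\wedge b_{c(a)}\wedge\bar b_{\,c(a)+1},$$
read with the conventions $b_0=\bar b_{d+1}=1$ and evaluated at $b_i=g_i(x)$, $\bar b_i=\overline{g_i(x)}$; a short check, using that $a\le x$ with $f(a)=1$ and $c(a)=c(x)$ is impossible once $f(x)=0$, shows $\phi=f(x)$. The middle stage must therefore produce $\bar g_1,\dots,\bar g_d$ from the already-sorted tuple $g_1\ge\cdots\ge g_d$, and the crux is the sub-lemma that a monotone-decreasing $d$-tuple can be inverted using only $\lceil\log_2(d+1)\rceil$ NOT gates: negating the middle entry with one NOT gate determines in which ``half'' the number of ones lies, a NOT-free multiplexer controlled by that bit and its complement (which is itself one of the data bits) builds from the two halves a single sorted tuple of length about $d/2$, one recurses once on it, and the outputs are reassembled, giving the recursion $T(d)=T(\lceil(d-1)/2\rceil)+1$, so $T(d)=\lceil\log_2(d+1)\rceil$.

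The delicate part is the upper bound, and within it the sorted-tuple inverter: a naive recursion places two half-size inverters side by side and spends linearly many NOT gates, so one must collapse the two cases into a \emph{single} recursive call, exploiting fan-out and the fact that the complement of the tested bit is already present as data rather than requiring a fresh negation; making the base cases and the parity of lengths line up so that $T(d)$ is exactly $\lceil\log_2(d+1)\rceil$ is the one fiddly computation. The rest — monotonicity of $c$, its strict increase across a decrease of $f$, and the verification that $\phi$ reproduces $f$ — is routine once the invariant is chosen correctly, namely the decrease of $f$ over chains ending \emph{at} $x$ rather than merely over chains lying below $x$.
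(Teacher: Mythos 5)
You should first note that the paper contains no proof of this proposition: it is stated as Markov's theorem with a citation to \cite{M58}, and only the idea of its upper-bound argument (attributed to \cite{FTR}) is later adapted to the formula setting. So your attempt can only be judged on its own terms, and on those terms it is correct and is essentially the classical argument. The lower bound---induction on the number $r$ of NOT gates, peeling off a topologically first NOT gate whose input $h$ is monotone so that $\neg h$ drops at most once along any chain, giving $d\le 2(2^{r-1}-1)+1=2^r-1$---is exactly Markov's. The upper bound via the monotone level function $c(x)$ (maximum decrease over chains ending at $x$), the sorted indicator tuple $g_1\ge\cdots\ge g_d$, the selection formula $\phi$, and the $\lceil\log_2(d+1)\rceil$-negation inverter for sorted inputs is the standard Fischer-style construction; your key facts ($c$ is monotone, $c$ strictly increases across any decrease of $f$, hence $\phi=f$) are all right, and $T(d)=T(\lceil(d-1)/2\rceil)+1$ does solve to $\lceil\log_2(d+1)\rceil$. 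The one place to write out carefully is the multiplexer step of the inverter: with $s=\sum_i y_i$ and middle index $m$, the selected half $z_j=(\bar y_m\wedge y_j)\vee(y_m\wedge y_{m+j})$ is again sorted, and the reassembly $\bar y_j=\bar y_m\wedge\bar z_j$ for $j<m$ and $\bar y_j=\bar y_m\vee\bar z_{j-m}$ for $j>m$ uses only AND/OR of available signals; a naive version silently makes two recursive calls and loses the logarithm. Finally, it is worth making the contrast with the paper explicit: your inverter reuses the wire $y_m$ and the output of its single NOT gate many times, i.e., it needs fan-out greater than one, which is exactly what formulas forbid. That is why the paper's Theorem~\ref{thrm:main} replaces the binary recursion by the linear decomposition $f=f_1\vee(f_2\wedge\neg f_t)$ with $d(f_2)\le d(f)-1$, and the bound degrades from $\lceil\log_2(d(f)+1)\rceil$ to $d(f)$.
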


\section{Inversion Complexity in formulas} \label{sec:main}

\subsection{Result}

We denote by $\I_f(f)$ the inversion complexity of a Boolean function $f$
in formulas.
We consider only single-output Boolean functions.
The result of this note is the following one.

\begin{theorem} \label{thrm:main}
For every Boolean function $f$,
$$\I_f(f) = d(f).$$
\end{theorem}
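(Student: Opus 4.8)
The plan is to establish the two bounds $\I_f(f)\ge d(f)$ and $\I_f(f)\le d(f)$ separately, the first by a Markov‑style counting argument adapted to the tree structure of a formula, the second by an explicit construction.

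\emph{Lower bound.} Let $C$ be any formula computing $f$ and fix a chain $X: x^1<\cdots<x^k$ with $d_X(f)=d(f)$. For a wire $w$ of $C$ let $g_w$ be the function computed at $w$, let $D(w)$ be the number of indices $i$ with $g_w(x^i)=1$ and $g_w(x^{i+1})=0$ (the ``down‑steps'' of $g_w$ along $X$), and let $U(w)$ be the number of ``up‑steps''; since $U(w)-D(w)=g_w(x^k)-g_w(x^1)\in\{-1,0,1\}$ we always have $U(w)\le D(w)+1$. First I would prove, by induction on the subformula rooted at $w$, that $D(w)$ is at most the number of NOT gates in that subformula: for an input wire $D(w)=0$ because $X$ is increasing; at an AND or OR gate every down‑step of the output is a down‑step of one of the two inputs, so $D(w)\le D(u)+D(v)$, and because every gate has fan‑out one the two input subformulas are vertex‑disjoint, so their NOT‑counts add; at a NOT gate the down‑steps of the output are exactly the up‑steps of its input, so $D(w)=U(u)\le D(u)+1$, which is absorbed by the extra NOT gate. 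Applying this at the output wire gives $d(f)=D(\mathrm{output})\le not(C)$.

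\emph{Upper bound.} For $x\in\{0,1\}^n$ define the \emph{level} $L(x)$ to be the maximum of $d_Y(f)$ over all chains $Y$ whose largest element is $x$; then $0\le L(x)\le d(f)$, and appending $x'$ to an optimal chain for $x$ shows that $x\le x'$ implies $L(x)\le L(x')$, i.e.\ $L$ is monotone. Hence each threshold function $p_j(x):=[\,L(x)\ge j\,]$, $1\le j\le d(f)$, is monotone. The crucial point is that $f$ has no down‑step within a single level: if $x<x'$, $f(x)=1>0=f(x')$ and $L(x)=L(x')$, then extending a chain through $x$ and $x'$ forces $L(x')\ge L(x)+1$, a contradiction; consequently the monotone functions $f_j(x):=\bigvee\{\,[x\ge y]\;:\;L(y)=j,\ f(y)=1\,\}$ agree with $f$ on the slice $\{x:L(x)=j\}$. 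Writing $p_0\equiv 1$ and $p_{d(f)+1}\equiv 0$, the product $p_j(x)\wedge\overline{p_{j+1}(x)}$ equals $1$ exactly when $L(x)=j$, so
\[ f \;=\; \bigvee_{j=0}^{d(f)}\; f_j\wedge p_j\wedge\overline{p_{j+1}}. \]
In this expression every $f_j$ and every $p_j$ is monotone and needs no NOT operator, while the only negations are $\overline{p_1},\dots,\overline{p_{d(f)}}$, each occurring in a single disjunct ($\overline{p_{d(f)+1}}$ is the constant $1$). Realising each $p_j$ by a monotone formula feeding one NOT operator and combining everything by a monotone formula yields a formula for $f$ with exactly $d(f)$ NOT operators, so $\I_f(f)\le d(f)$.

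\emph{Main obstacle.} The lower‑bound induction is essentially routine — it is Markov's counting argument, with the fan‑out‑one condition doing the bookkeeping that lets the NOT counts add rather than be shared. The part needing care is the upper bound: verifying that $L$ is monotone and that level slices of $f$ are down‑step free, so that the displayed $(d(f)+1)$‑term identity is valid and, most importantly, uses each negated block \emph{exactly once}, so that laying the formula out as a tree does not duplicate NOT operators. I would also check the degenerate cases ($f$ constant or monotone, where $d(f)=0$) and contrast this with Proposition~\ref{prop:markov}: a circuit may merge the inverters $\overline{p_1},\dots,\overline{p_{d(f)}}$ through a depth‑$\lceil\log_2(d(f)+1)\rceil$ network, which is exactly the sharing a formula cannot perform.
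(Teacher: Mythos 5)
Your proposal is correct, and both halves take a genuinely different route from the paper. For the lower bound, the paper fixes the optimal chain and tracks one global quantity, the number of NOT gates in the ``down'' state (input $1$) on a given input vector: a path-tracing argument pairs each NOT gate that flips up with a distinct NOT gate that flips down (distinctness coming from fan-out one), showing this quantity never decreases along the chain and strictly increases at every down-step of $f$. You instead run a structural induction on the formula tree, bounding the number of down-steps $D(w)$ of each wire by the NOT count of its subtree, with fan-out one entering only to make the counts of sibling subtrees additive; this avoids the state-flipping analysis entirely and is arguably cleaner. For the upper bound, the paper inducts on $d(f)$: it isolates the monotone set $S$ of inputs from which no further decrease is possible, writes $f = f_1 \lor (f_2 \land \neg f_t)$ with $f_t$ the indicator of $S$ and $d(f_2)\le d(f)-1$, and spends one NOT per recursion level. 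Your construction is essentially the non-recursive unrolling of this idea (your level function $L$ is the dual of the paper's ``remaining decrease''), and it buys something the recursion hides: it exhibits the $d(f)$ negations $\overline{p_1},\dots,\overline{p_{d(f)}}$ simultaneously and makes explicit exactly what a circuit can share but a formula cannot, namely the $\lceil\log_2(d(f)+1)\rceil$-inverter decoder for the thresholds $p_j$, which is the right way to see why Proposition~\ref{prop:markov} and Theorem~\ref{thrm:main} differ. One loose end, present in the paper as well: both constructions may require constant subfunctions (your $f_j$ when no $y$ with $L(y)=j$ has $f(y)=1$; the paper's $f_1$ when $f$ vanishes on $S$), which a NOT-free $\{\land,\lor\}$-formula over variables cannot express unless constants are admitted as inputs, so the degenerate cases deserve the explicit check you flagged.
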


\noindent In the rest, we prove Theorem~\ref{thrm:main}.

\subsection{Upper bound}

We prove $\I_f(f) \leq d(f)$.
We use a similar argument to one which is used to prove
Proposition~\ref{prop:markov} \cite{FTR}.

\begin{proof}[Proof (the upper bound of $\I_f(f)$)]
We use induction on $d(f)$.

\smallskip
\noindent {\it Base:}
$d(f)=0$. Then $f$ is monotone and $\I_f(f)=0$.

\smallskip
\noindent {\it Induction Step:}
Suppose
$$\I_f(f') \leq d(f')$$
for every Boolean function $f'$ such that $d(f') \leq d(f) - 1$.

\begin{figure}[t]
 \center{
  \includegraphics[scale=0.6]{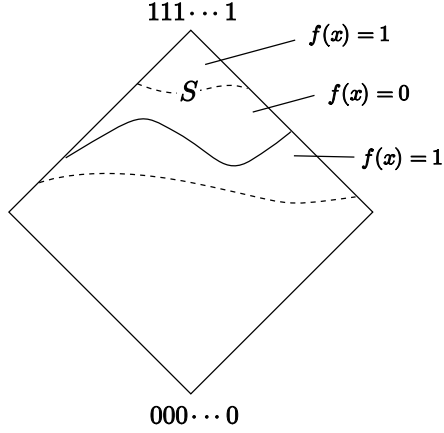}
  \caption{The separation of $f$.}
  \label{fig:func}
 }
\end{figure}

First we separate $f$ to two functions $f_1$ and $f_2$ as follows.
See Fig. \ref{fig:func}.
Let $S$ be the set of all vectors $x \in \{0,1\}^n$ such that for
every chain $X$ starting with $x$, $d_X(f)=0$.
We define $f_1$ and $f_2$ as follows:
$$f_1(x) = \cases{
            f(x) & \mbox{if $x \in S$;} \cr
            0    & \mbox{otherwise,}
           }$$
and
$$f_2(x) = \cases{
            1    & \mbox{if $x \in S$;} \cr
            f(x) & \mbox{otherwise.}
           }$$
We define $f_t$ as follows:
$$f_t(x) = \cases{
            1 & \mbox{if $x \in S$;} \cr
            0 & \mbox{otherwise.}
           }$$

By the definitions of $f_1$ and $S$,
\begin{equation}
d(f_1) = 0. \label{eq:f1}
\end{equation}
Next we show that $d(f_t) = 0$.
Let $x$ and $x'$ be Boolean vectors in $\{0,1\}^n$ such that $x \leq x'$.
Suppose that $f_t(x')=0$, i.e.,  $x' \not\in S$.
Since $x' \not\in S$, there is a chain $X'$ starting with $x'$
and such that $d_{X'}(f) \geq 1$.
Then for a chain $X$ which starts with $x$ and includes $X'$, $d_X(f) \geq 1$.
Therefore $x \not\in S$.
Thus if $f_t(x')=0$, then $f_t(x)=0$, which means
\begin{equation}
d(f_t) = 0. \label{eq:ft}
\end{equation}
Finally we show that
\begin{equation}
d(f_2) \leq d(f) - 1. \label{eq:f2}
\end{equation}
We assume that $d(f_2) > d(f) - 1$.
Since $f_2(x)=1$ for $x \in S$, there is a chain $X_1$ ending
in a vector $x' \not\in S$ and such that
$d_{X_1}(f) = d_{X_1}(f_2) > d(f) - 1$.
Since the $x'$ is not in $S$, there is a chain $X_2$ starting with $x'$ and
such that $d_{X_2}(f) \geq 1$.
Let $X'$ be the chain which is obtained by connecting $X_1$ and $X_2$.
Then,
\begin{eqnarray*}
d_X(f) & = & d_{X_1}(f) + d_{X_2}(f) \\
       & > & (d(f) - 1) + 1 = d(f).
\end{eqnarray*}
Thus a contradiction happens.

By the supposition and Eq.~(\ref{eq:f1}) to (\ref{eq:f2}),
there are a formula $C_2$ computing $f_2$ such that
$not(C_2) \leq d(f_2)$ and formulas $C_1$ and $C_t$ computing $f_1$ and $f_t$
respectively such that $not(C_1) = not(C_t) = 0$.
We construct a formula $C$ computing $f$ from $C_1, C_2$ and $C_t$
as $C$ computes the following:
$$f_1 \lor (f_2 \land \neg f_t).$$

The number of NOT gates in $C$ is
\begin{eqnarray*}
not(C) &  =   & not(C_1) + not(C_2) + not(C_t) + 1 \\
       & \leq & d(f_2) + 1 \\
       & \leq & d(f)
\end{eqnarray*}

\noindent We show that $C$ computes $f$ for each of the following two cases.

\medskip
\noindent Case 1: The input $x$ is in $S$.

Then $f_1(x) = f(x)$ and $f_t(x)=1$. Therefore
$$f_1 \lor (f_2 \land \neg f_t) = f_1 = f.$$

\medskip
\noindent Case 2: The input $x$ is not in $S$.

Then $f_1(x) = 0$, $f_2(x) = f(x)$ and $f_t(x)=0$. Therefore
$$f_1 \lor (f_2 \land \neg f_t) = f_2 = f.$$

\noindent Thus the formula $C$ computes $f$ and has at most $d(f)$
NOT gates. Therefore $\I_f(f) \leq d(f)$.

\end{proof}

\subsection{Lower bound}

We prove $\I_f(f) \geq d(f)$.
If the input of a NOT gate $N$ is $0$ and the output is $1$,
then we call the state of $N$ {\it up}.
If otherwise, we call the state {\it down}.
We denote by $not_{d}(C, x)$ the number of NOT gates 
whose states are down in a formula $C$ given $x$ as the input of $C$.

\begin{proof}[Proof (the lower bound of $\I_f(f)$)]
Let $C$ be a formula computing $f$.
Let $X$ be an increasing sequence $x^1 < x^2 < \cdots < x^k$ of
Boolean vectors in $\{0,1\}^n$ such that $d_X(f) = d(f)$.

\begin{lemma} \label{lem:lb1}
Let $x$ and $x'$ be Boolean vectors in $\{0,1\}^n$ such that $x < x'$,
$f(x)=1$ and $f(x')=0$. Then,
$$not_d(C, x') - not_d(C, x) \geq 1.$$
\end{lemma}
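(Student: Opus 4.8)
The plan is to prove a statement about \emph{every} gate of $C$ (and every input variable), not merely about the output gate, by induction on the structure of the formula. For a gate or input $w$, let $w(\xi)$ be the Boolean value it carries when $C$ is fed the input $\xi$, and let $m(w,\xi)$ be the number of NOT gates lying in the subformula rooted at $w$ whose state is \emph{down} on input $\xi$ (so $m(w,\xi)=0$ when $w$ is an input variable, and $m(w,\xi)=not_d(C,\xi)$ when $w$ is the output gate). Because $C$ is a formula, these subformulas are genuine trees, which gives the two bookkeeping identities I will use: if $w=a\wedge b$ or $w=a\vee b$ then $m(w,\xi)=m(a,\xi)+m(b,\xi)$ (the subformulas of $a$ and $b$ are disjoint and share no NOT gate), and if $w=\neg a$ then $m(w,\xi)=m(a,\xi)+[\,a(\xi)=1\,]$, since the NOT gate is in the down state exactly when its input is $1$.

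The claim to be proved for every $w$ is the conjunction of (i) $m(w,x')\ge m(w,x)$ and (ii) if $w(x)=1$ and $w(x')=0$ then $m(w,x')\ge m(w,x)+1$; the lemma is then the instance of (ii) at the output gate, using $f(x)=1$ and $f(x')=0$. First I would dispatch the base case: for an input variable $w=x_i$ we have $m(w,\xi)=0$, so (i) is trivial, and $x\le x'$ forces $x_i\le x'_i$, so the premise of (ii) never holds. For an AND gate $w=a\wedge b$ (the OR case being symmetric), (i) is immediate from the inductive hypothesis for $a$ and $b$; for (ii), $w(x)=1$ forces $a(x)=b(x)=1$ while $w(x')=0$ forces $a(x')=0$ or $b(x')=0$, so one child, say $a$, is in the situation of (ii), and the inductive (ii) for $a$ together with the inductive (i) for $b$ sum to the desired inequality. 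For a NOT gate $w=\neg a$ I would split on the pair $(a(x),a(x'))\in\{(0,0),(1,1),(0,1),(1,0)\}$: in the first three cases (i) and (ii) for $w$ follow from (i) for $a$, and in the last case $a$ itself is in the situation of (ii), so (i) for $w$ follows from the inductive (ii) for $a$ while (ii) for $w$ holds vacuously.

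I expect the real content to be in choosing this induction hypothesis rather than in executing the cases, which are short. The target bound applied only to the output does not by itself propagate downward through the formula, so one is forced to carry along the auxiliary monotonicity statement (i) at \emph{all} gates --- it is exactly what handles the ``passive'' input of an AND/OR gate and the NOT gate sitting above an ``up'' wire. The hypothesis $x<x'$ is consumed at a single spot, the leaves, where it forbids a down input variable; everything upward is formal. It also seems worth pointing out in the writeup that the identity $m(w,\xi)=m(a,\xi)+m(b,\xi)$ is precisely where fan-out one is used: for an arbitrary circuit the two subcircuits may share NOT gates and the additivity --- hence the whole argument, and indeed the bound $d(f)$ --- breaks down.
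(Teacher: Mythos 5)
Your proof is correct, but it takes a genuinely different route from the paper's. The paper argues globally: when the input changes from $x$ to $x'$, every NOT gate that switches from down to up must be fed (along a NOT-free path) by some NOT gate that switches from up to down, the output's $1\to 0$ transition likewise requires its own up-to-down NOT gate upstream, and fan-out one guarantees all these ``causes'' are pairwise distinct, so the down-count rises by at least one. You instead run a structural induction over the tree with a strengthened invariant --- monotonicity of the subtree down-count at every wire, plus the $+1$ increment whenever a wire's value drops from $1$ to $0$ --- and both proofs consume the hypotheses in the same places (fan-out one for disjointness/additivity of the subtree counts, $x<x'$ only at the leaves). Your version is more mechanical and arguably more rigorous: the paper's claim that each down-to-up gate ``is connected from'' an up-to-down gate by a NOT-free path is a small path-chasing argument left implicit, whereas your case analysis is fully explicit. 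A further bonus is that your auxiliary statement (i), instantiated at the output gate, is exactly Lemma~\ref{lem:lb2}, so your induction proves both lemmas at once; the paper gets Lemma~\ref{lem:lb2} by rerunning its argument without $N'_o$. What the paper's approach buys in exchange is brevity and a direct picture of why the bound holds, at the cost of leaving the propagation and distinctness steps somewhat informal.
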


\begin{proof}
We change the input of $C$ from $x$ to $x'$.
Let $N_1, N_2, \ldots, N_m$ be all NOT gates which change from
down state to up state at the time.
Since $x < x'$, each $N_i$ for $1 \leq i \leq m$ is connected
from $N'_i$ which changes from
up state to down state by a path including no NOT gate.
Since the output of $C$ changes from $1$ to $0$, the output of $C$ is also
connected from $N'_o$ which changes from up state to down state
by a path including no NOT gate.
$N'_1, N'_2, \ldots, N'_m$ and $N'_o$ are distinguished from each other,
since $C$ is a formula.
Thus the number of NOT gates whose states are down increases
by at least one.
\end{proof}

\begin{lemma} \label{lem:lb2}
Let $x$ and $x'$ be Boolean vectors in $\{0,1\}^n$ such that $x < x'$.
Then,
$$not_d(C, x') - not_d(C, x) \geq 0.$$
\end{lemma}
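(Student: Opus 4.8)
The plan is to re-run the argument of Lemma~\ref{lem:lb1}, but simply discard the part that invoked the output gate of $C$. Change the input of $C$ from $x$ to $x'$, and let $N_1,N_2,\dots,N_m$ be all NOT gates that switch from the down state to the up state. These are precisely the NOT gates that decrease the quantity $not_d(C,x')-not_d(C,x)$, so if we write $A$ for the set of NOT gates switching from down to up and $B$ for the set switching from up to down, then $not_d(C,x')-not_d(C,x)=|B|-|A|$, and it suffices to show $|B|\ge|A|=m$.

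First I would argue, exactly as in the proof of Lemma~\ref{lem:lb1}, that each $N_i\in A$ is reached, along a path containing no NOT gate, by some NOT gate $N'_i$ that switches in the opposite direction, i.e.\ from up to down. The justification is monotonicity: the value carried on the input wire of $N_i$ is a monotone function of the primary inputs together with the outputs of those NOT gates whose cones feed $N_i$ without passing through a further NOT gate; since the primary inputs only increase as we go from $x$ to $x'$, while the input of $N_i$ decreases (switching from down to up means its input goes from $1$ to $0$), at least one of those feeding NOT-gate outputs must decrease from $1$ to $0$, which for a NOT gate means it passes from the up state to the down state. Hence such an $N'_i\in B$ exists.

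Next I would use the formula hypothesis to conclude that $N'_1,N'_2,\dots,N'_m$ are pairwise distinct: since every gate in a formula has fan-out one, the wire leaving $N'_i$ determines a unique forward path, hence a unique ``next'' NOT gate, so the assignment $N_i\mapsto N'_i$ is injective. Therefore $|B|\ge m=|A|$, and $not_d(C,x')-not_d(C,x)=|B|-|A|\ge 0$.

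The main obstacle, just as in Lemma~\ref{lem:lb1}, is pinning down the ``reached by a NOT-free path'' claim and its injectivity cleanly; once one accepts that in a formula each forward-propagated decrease traces back to a distinct NOT gate, the inequality is immediate. Unlike Lemma~\ref{lem:lb1}, there is no output constraint forcing an extra up-to-down switch, which is exactly why we only get $\ge 0$ here rather than $\ge 1$.
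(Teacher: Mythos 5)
Your proof is correct and follows exactly the paper's intended route: the paper's own proof of Lemma~\ref{lem:lb2} is just the one-line remark that one repeats the argument of Lemma~\ref{lem:lb1} while omitting the output gate $N'_o$, which is precisely what you do (tracing each down-to-up gate $N_i$ back along a NOT-free path to a distinct up-to-down gate $N'_i$, with distinctness from fan-out one). Your write-up is simply a more explicit version of the same argument.
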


\begin{proof}
We can use a similar argument to one of Lemma~\ref{lem:lb1}.
In this case, we do not consider $N'_o$.
\end{proof}

Since on $X$ the number of indices $i$ such that $f(x^i)=1$
and $f(x^{i+1})=0$ is at least $d(f)$,
by Lemma~\ref{lem:lb1} and \ref{lem:lb2},
$$not_d(C, x^k) - not_d(C, x^1) \geq d(f).$$
Thus $C$ includes at least $d(f)$ NOT gates.
\end{proof}

\end{document}